\def\shownotes{1} 
\newcommand{\authnote}[2]{{ $\ll$\textsf{\footnotesize #1 notes: #2}$\gg$}}
\newcommand{\authnote}[2]{}
\providecommand{\ie}{\emph{i.e.,} }
\providecommand{\eg}{\emph{e.g.,} }
\providecommand{\etc}{\emph{etc.}}      
\newtheorem{thm}{Theorem}[section]
\newtheorem{lem}[thm]{Lemma}
\newtheorem{lemma}{Lemma}
\newtheorem{definition}{Definition}
\begin{document}

\title{Network-Destabilizing Attacks\thanks{This is the revised and expanded version of a paper that appeared as a brief announcement in \emph{PODC} 2012 \cite{LGSPODC12}.  This project was supported by NSF Grant S-1017907.}}
\date{}

\author{Robert Lychev \and Sharon Goldberg \and Michael Schapira}
\maketitle

\begin{abstract}
The Border Gateway Protocol (BGP) sets up routes between the smaller networks that make up the Internet. Despite its crucial role, BGP is notoriously vulnerable to serious problems, including (1) propagation of bogus routing information due to attacks or misconfigurations, and (2) network instabilities in the form of persistent routing oscillations.  The conditions required to avoid BGP instabilities are quite delicate.  How, then, can we explain the observed stability of today's Internet in the face of common configuration errors and attacks?

This work explains this phenomenon by first noticing that almost every observed attack and misconfiguration to date shares a common characteristic: even when a router announces egregiously bogus information, it will continue to announce the same bogus information for the duration of its attack/misconfiguration. We call these the ``fixed-route attacks'', and show that, while even simple fixed-route attacks can destabilize a network, the commercial routing policies used in today's Internet prevent such attacks from creating instabilities.

\end{abstract}

\section{Introduction}

The Internet is composed of smaller networks, called Autonomous Systems (ASes) (\eg AT\&T, Bank of America, Google, \etc). The Border Gateway Protocol (BGP) is a distributed protocol that allows ASes to learn to reach distant ASes via announcements from their neighboring ASes. Each BGP announcement contains a list of every AS en route to a destination; each AS repeatedly applies its local routing policy to select a single available route to each destination, and announces that route to its neighbors. Despite it's crucial role, BGP routing is notoriously vulnerable to a number of serious problems:

\smallskip\noindent{\bf Bogus routing information.} Because the Internet currently lacks infrastructure to validate the correctness of information in routing messages (\eg does the route actually exist? is one AS impersonating another), an AS can propagate false routing information through the Internet and thus influence the routes selected by other ASes.
We see this quite frequently in practice; a typical cause is a configuration error, where a router is mistakenly programmed to announce a bogus route\cite{AS7007,con-ed}, but we also worry about attacks where a router deliberately manipulates routing information to draw traffic to its network (that it can tamper with, drop, or eavesdrop on)~\cite{DEFCON,pakistan,china}.

\smallskip\noindent{\bf Instability.} BGP allows ASes great expressiveness in configuring local routing policies. Unfortunately, these routing policies can interact in ways that lead to persistent routing oscillations, \ie situations where some ASes endlessly change the route they select, even when network structure is static (in terms of network topology, ASes' routing policies, \etc). BGP oscillations render the network unpredictable and can significantly harm network performance, as every time a router switches routes, it is bound to delay, misorder, or even drop, some fraction of the traffic it is carrying) ~\cite{katabiVoIP,labovitzStability}.

\smallskip\noindent
On the bright side, we have never seen events in which bogus routing information has inadvertently lead to a BGP instability. One might claim the attacks/misconfigurations we have seen in the wild were never intended to create BGP instabilities. However, given the delicate conditions required to avoid BGP instabilities \cite{GSW,GR}, the fact that a misbehaving AS has never caused the system to tip into an unstable state is quite surprising.  How, then, can we explain the observed stability of today's Internet in the face of common errors and attacks?

This work explains this phenomenon by first noticing that almost every observed attack and misconfiguration to date~\cite{AS7007,con-ed,pakistan,china,DEFCON} shares a common characteristic: even when a router announces egregiously bogus information, it will continue to announce the same bogus information for the duration of its attack/misconfiguration.  We call this class of attacks the ``fixed-route attacks''; one famous and common example of this attack is the ``prefix hijack'', where an AS announces an IP prefix belonging to another AS (\eg in 2008 Pakistan Telecom claimed to be the legitimate destination for Internet addresses belonging to YouTube, resulting in YouTube-bound traffic reaching Pakistan Telecom instead\cite{pakistan}).

While it is quite easy to come up with examples where a single fixed-route attack destabilizes BGP (see Figure~\ref{fig:gadget}),  our main result is to show that the \emph{routing policies} used in today's Internet prevent such attacks from triggering instabilities.

\subsection{Our Model}


We now present a brief and intuitive exposition of our model, which is based on the seminal work of Griffin \emph{et al.}~\cite{GSW} on BGP stability. See Section~\ref{sec:model} for a more thorough explanation. The network is modeled as an undirected graph $G=(V,E)$, where the set of nodes (vertices) $V$ represents the ASes, and the set of edges $E$ represents BGP communication links between ASes. The vertex set $V$ contains a unique destination node $d$ to which all other nodes aim to establish routes.\footnote{This is the standard formulation~\cite{GSW}.  We use this because BGP establishes routes to every destination IP prefix independently.} The routing system evolves over an infinite sequence of discrete time steps $t=1,\ldots$ At each time step $t$ a subset of the nodes is ``activated''. Whenever a non-attacker node is activated it executes the following actions:

\begin{enumerate}

\item Process the most recent update messages received from neighboring nodes, where each message contains the explicit list of all nodes on the neighbor's route to the destination.

\item Select the single ``best'' available route according to a local \emph{ranking} of all simple (loop-free) routes to the destination.

\item Announce this route to a subset of the neighboring nodes via update messages according to a local ``\emph{export policy}'', which determines which routes the node is willing to make available to each of its neighbors.

\end{enumerate}

Whenever a (fixed-route) attacker node is activated, it announces a fixed route (list of nodes ending in $d$) to each of its neighbors. We stress that, apart from the requirement that the attacker repeatedly send the same BGP route announcement to each neighbor, no restrictions whatsoever are imposed on the attacker. Thus, the attacker can pretend to be the destination (simply announce ``d''), announce different paths of nodes to different neighbors, not announce any route to some neighbors, \etc

Our aim is to identify conditions which imply network stability, where by network stability we mean the guarantee that from some moment forth, every non-attacker node's chosen route remain fixed, for \emph{every} choice of initial state of the system and schedule of node activation and update message arrivals.\footnote{We consider ``fair'' schedules. Intuitively, a schedule is ``fair'' if no node is indefinitely starved from acting, or from receiving update messages from a neighboring node. We stress that update messages in our model can be arbitrarily delayed and even dropped. Our positive results do not rely on any assumptions on the order of update message arrivals (\eg FIFO queueing).}

\subsection{Our Results}

\subsubsection{Even a Single Fixed-Route Attacker Can Destabilize a Network!}

Simple examples show that a stable network can easily be rendered unstable even by a single fixed-route attacker. Consider, for instance, the network described in Figure~\ref{fig:gadget}, and assume that each node's ranking of routes is as depicted beside it, and that each node has the same route-export policy---export all routes to all neighbors.

\begin{figure}
 \centering
 \includegraphics[width=1.7in]{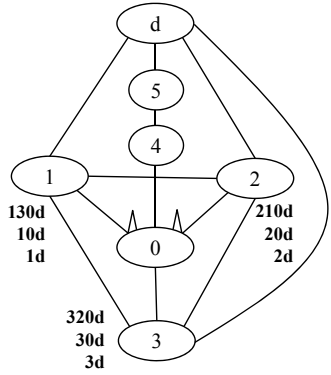}\caption{Node 0 can destabilize this network.} \label{fig:gadget}
\end{figure}

Now, consider this network \emph{before} node $0$ launches an attack. Even though each of nodes $1,2,$ and $3$ prefers the longer routes to $d$ via node $0$, these routes will not become available as the link $(0,d)$ does not exist. Thus, each of these nodes will choose the direct route to $d$, and the network is stable. Now, suppose node $0$ launches a fixed-route attack by announcing the bogus route ``$0,d$'' to all of its neighbors. This network is now an instance of the classic \textsc{Bad Gadget} network~\cite{GSW}, which is notoriously unstable! To understand why, suppose that nodes $1$ and $2$ think they are routing along $2,1,0,d$, while node $3$ thinks it uses the route $3,0,d$. This is clearly unstable, since node $1$ would rather be using the route $1,3,0,d$, and so it will change its route selection. By symmetry, this situation will repeat endlessly.  (Of course, each of nodes $1,2,3$ are actually using a much longer route through nodes $4$ and $5$, but because the Internet lacks the infrastructure to validate the correctness of routes, they have no idea that this is the case.)

\smallskip We identify interesting environments where stability is maintained in the presence of fixed-route attackers. We also quantify convergence rate in terms of asynchronous rounds~\cite{DolevT06,sss}, \ie periods of time in which each node gets at least one update message from each neighbor, and is activated at least once after receiving these updates.

\subsubsection{Shortest-Path Routing is Stable in the Presence of Fixed-Route Attacks}

We first consider the simple scenario that all non-attackers have shortest-path rankings of routes, that is, always prefer shorter routes to longer ones. We present the following, relatively easy to prove, result, which holds for all choices of export policies:

\vspace{0.1in}\noindent {\bf Theorem 1:} {\em When all nodes have shortest-path rankings, convergence to a stable routing state is guaranteed within $|V|$ asynchronous rounds even in the presence of fixed-route attacks.}\vspace{0.1in}

We point out that this positive result holds for every network, regardless of the number and locations of the fixed-route attackers, and of the specific fixed-route attacks launched.

\subsubsection{Commercial Routing is Stable in the Presence of Fixed-Route Attacks}

We now turn our attention to the commercial routing framework presented by Gao and Rexford~\cite{GR}, which is believed to capture ASes' routing policies in practice. ASes sign bilateral long-term business contracts which determine who provides connectivity to whom. Typically, neighboring ASes have one of two business relationships: \emph{customer-provider}, in which one AS (the customer) purchases connectivity from another AS (the provider), and \emph{peering}, in which the two ASes agree to carry transit traffic between their customers for free. These business relationships naturally induce the following restrictions on ASes' routing policies, formalized by Gao and Rexford in \cite{GR}: These business relationships naturally induce restrictions on ASes' routing policies: (1) an AS prefers revenue-generating routes through customers over routes through its peers and providers; and (2) an AS only carries traffic from one neighbor to another neighbor if at least one of them pays it, \ie is its customer.\footnote{We also assume that there can be no cycle of customer-provider edges in the AS-level digraph, as an AS cannot be an indirect customer of itself\cite{GR}.}

We explore network stability in the context of commercial routing. Our main, surprisingly strong, result is the following:

\vspace{0.1in}\noindent {\bf Theorem 2:} {\em When all nodes have commercial routing policies, convergence to a stable routing state is guaranteed within $2X+1$ asynchronous rounds even in the presence of fixed-route attacks, where $X$ is the depth of the customer-provider hierarchy.}\vspace{0.1in}

Again, this result holds regardless of the number and locations of the fixed-route attackers, and of the specific fixed-route attacks launched. In today's Internet, the depth of the customer-provider hierarchy is very shallow (roughly five levels on average~\cite{cyclops}). Hence, commercial routing guarantees not only network stability, but also fast convergence, even in the presence of fixed route attacks. In our view, these results explain the observed stability of today's Internet in the face of common configuration errors and attacks.

\section{Model}\label{sec:model}

\subsection{BGP Dynamics in the Presence of Fixed-Route Attacks}

\noindent{\bf Network and routing policies.} The network is defined by an \emph{AS graph} $G=(V,E)$, where $V$ represents the set of ASes, and $E$ represents BGP communication links between ASes. $V$ consists of $n$ \emph{source-nodes} $\{1, \ldots, n\}$, and a unique \emph{destination node} $d\notin [n]$. (We follow the standard model~\cite{GSW} where there is a single destination node, because in the Internet routes to every destination IP prefix are computed independently.) Let $S_A\subseteq V$ denote the set of \emph{fixed-route attackers}.

We think of a route in $G$ from a node $i$ to the destination node $d$ as a sequence of nodes starting in node $i$ and ending in node $d$. For every non-attacker source node $i\in [n]\setminus S_A$, let $X^i$ denote the set containing all possible sequences of nodes in starting in node $i$ and ending in the destination $d$, and also the ``empty route'' $\emptyset$.  Each non-attacker source node $i$ has a \emph{ranking function} $\leq_i$, which defines an order over all sequences of nodes in $X^i$. Note that $i$'s preferences over routes in our model are thus not restricted only to routes which actually exist in the network, but also to non-existent routes. This is needed to model $i$'s reaction to bogus routing information propagated by an attacker. The empty route captures the possibility that $i$ has no route to the destination. We allow ties in $\leq_i$ between two sequences in $X^i$ only if they share the same ``next-hop'' node, \ie the node that comes after $i$ in both sequences is the same.

%

In addition, each non-attacker source node $i$ has an {\em export policy}, which captures the routes that $i$ is willing to announce to neighboring nodes. $i$'s export policy specifies, for every neighboring node $j$, a set of routes $E^{ij}\subseteq
X^i$, such that $\emptyset\in E^{ij}$, that $i$ is willing to announce to neighbor $j$.

\vspace{0.1in}\noindent{\bf System dynamics.} BGP belongs to an abstract family of routing protocols named \emph{path-vector protocols}~\cite{GSW}. Basically, the routing tree to a given destination is built, hop-by-hop, as knowledge about how to reach that destination propagates through the network. The process is initialized when $d$ sends an ``update message'' to announces itself. From this moment forth, each (``well behaved'') source node establishes a route to $d$ by repeatedly choosing the ``best'' route announced to it by its neighbors and announcing this route to its neighbors.

This modeled as follows. The routing system evolves over an infinite sequence of discrete time steps  $t=1\ldots$. At each time step $t$, a subset of the nodes $S^t\subseteq V$ is ``activated''. Whenever a source node $i\notin S_A$ is activated, it executes the following actions:

\begin{enumerate}

\item Process the most recent update messages received from neighboring nodes, where each message contains the explicit list of all nodes on the neighbor's route to the destination.

\item Select the single ``best'' available route according to the local ranking routes $\leq i$.

\item Announce this route to a subset of the neighboring nodes via update messages according to node $i$'s export policy.

\end{enumerate}

Whenever an attacker node $A\in S_A$ is activated, it announces a fixed list of nodes ending in the destination $R_{Ai}$ to each neighboring nodes $i$ via BGP update messages.

\noindent{\bf Convergence in the presence of fixed-route attackers.} We say that the routing system is \emph{convergent} if for every initial routing configuration, and for every possible order of node activations and update message arrivals, then there is a point in time after which the selected route of each non-attacker source node remains constant. We use the notion of \emph{asynchronous rounds} to measure the convergence rate of the system (worst-case across all initial routing configurations and all fair schedules). An asynchronous round is a period of time in which each node gets at least one update message from each neighbor (if there are update messages in transit), and is activated at least once after receiving all these updates. The first asynchronous round is the shortest period of time in which this is satisfied. The second asynchronous round is the shortest period of time in which this is satisfied immediately following the first asynchronous round,\emph{ etc.}

\subsection{Routing Systems}

\subsubsection{Shortest-Path Routing Systems}

Source node $i$ has a \emph{shortest-path} ranking if $i$ always prioritizes shorter routes over longer routes. We measure route length by the number of nodes on that route. Thus, for every two non-empty routes $Q,R\in X^i$, if $R$ is shorter than $Q$, then $Q <_i R$. We make no restrictions on how routes of equal length are ranked. A shortest-path routing system is a routing system where all source nodes have shortest-path rankings. No restrictions are imposed on nodes' export policies in shortest-path routing systems.

\subsubsection{Commercial-Routing Systems}

In today's commercial Internet neighboring ASes typically have one of two common business relationships: \emph{customer-provider} and \emph{peer-to-peer}. These relationships induce a hierarchy, in which no AS is its own indirect customer (that is, there is no sequence of ASes $i_1,\ldots,i_k$ such that $i_1=i_k$ and every AS is a customer of the AS that comes before it). We now present ranking functions and export policies that are consistent with the economics of these relationships.

\vspace{0.1in}\noindent{\bf Commercial rankings.} Consider a specific source node $i$. We call a route in $X^i$ a ``\emph{customer route}'' if the first edge on the route is from $i$ to a customer of $i$. Similarly, we call a route in $P^i$ a ``\emph{peer route}'', or a ``\emph{provider route}'', if the first edge on the route is from $i$ to its peer, or provider, respectively.  Source node $i$ has a commercial ranking~\cite{GR} if it always prefers customer routes over peer and provider routes, that is, for every customer route $Q$ and every peer or provider route $R$ it holds that $R<_i Q$.

\vspace{0.1in}\noindent{\bf Commercial export policies.} Source node $i$ has a commercial export policy if it exports peer or provider routes only to its customers.  That is, for every neighboring node $j$ that is a peer or provider of $i$, it follows that $E^{ij}$ consists of customer routes only.

\section{Shortest-Path Routing Systems}

We now present the notion of perceivable routes, which plays a major role in our proofs. We first introduce the following notation.
Consider a route $R =(i,\ldots,j,\ldots,d)$. We denote the prefix of route $R$ ending in node $j$, and the suffix of route $R$ starting in node $j$, by $R^{|j}$ and $R_{|j}$, respectively. We denote the predecessor node and successor node of node $j$ on $R$ by $pred(j,R)$ and $succ(j,R)$, respectively.

Intuitively a route $R\in P^i$ is perceivable at node $i$ if there is some imaginable scenario in which this route is propagated, hop by hop, towards $i$ from either the destination node $d$ or an attacker node $A\in S_A$.

\begin{definition} [perceivable routes]
A simple (loop-free) route $R = \{i, \ldots, d\}\in P^i$ is \emph{perceivable} at node $i\in [n]$ if one of the two following conditions holds:

\begin{enumerate}

\item Route $R$ contains no fixed-route attackers in $S_A$, and $R_{|k}\in E^{k, j}$ for every edge $(j, k)\in R$.

\item Route $R$ contains a single fixed-route attacker $A\in S_A$ such that (i) $E^{A, pred(A,R)} = R_{|A}$; and (ii) $R_{|k}\in E^{k, j}$ for every edge $(j, k)\in R^{|pred(A,R)}$.

\end{enumerate}
\end{definition}

\noindent Let $PR^i$ to be the set of all perceivable routes at node $i$.  We introduce another concept before proving the main theorem in this section.

\begin{definition}[best perceivable routes]
The set of \emph{best perceivable routes} of a non-adversarial node $i$, $BPR^i \subseteq PR^i$, is the set of $i$'s perceivable routes that have the highest rank in $\leq_i$.
\end{definition}

Recall that ties in $\leq_i$ are only allowed between routes that share the same next-hop node. Thus, the set of best perceivable routes for a node $i$ must all share the same next hop. We call this node the ``\emph{best next hop}'' for $i$.

\begin{thm} \label{thm_SP_convergence_time}
Every shortest-path routing system is convergent for every choice of fixed-route attackers in $N$.  Moreover, convergence to a stable state is guaranteed within $n$ asynchronous rounds.
\end{thm}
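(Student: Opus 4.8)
The plan is to establish both assertions at once by a two-layer induction on the asynchronous round number: an outer ``purging'' argument that kills off \emph{spurious} routes (routes that are not perceivable, i.e.\ that cannot be traced hop-by-hop back to $d$ or to an attacker along compatible export policies) and an inner ``layering'' argument showing that the surviving, perceivable routes settle one hop per round, in the style of Bellman--Ford.

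First I would record two elementary facts. (a) Since routes are simple and $|V|=n+1$, every perceivable route has at most $n+1$ nodes. (b) If a non-attacker node $v_0$ selects (hence announces) a route $R=(v_0,v_1,\ldots,d)$ that is \emph{not} perceivable at $v_0$, then $v_0$ learned $R$ from $v_1$ announcing $R_{|v_1}$, and $R_{|v_1}$ is itself not perceivable at $v_1$: otherwise the export conditions in the definition, together with $R_{|v_1}\in E^{v_1,v_0}$ and loop-freeness of $R$, would make $R$ perceivable at $v_0$. Moreover $R_{|v_1}$ has strictly fewer nodes than $R$. Iterating (b) backwards along a spurious route yields the invariant $J_k$: \emph{after the $k$-th asynchronous round, every route selected by a non-attacker is either perceivable or has more than $k+1$ nodes}. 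The base case $J_1$ holds because a $2$-node route $(i,d)$ can survive an activation only if it was received, and the only way to receive a $2$-node route is from $d$ itself, which makes it perceivable. For $J_k\Rightarrow J_{k+1}$: after round $k+1$ a node acts on a neighbor's announcement $R_{|v_1}$ that was that neighbor's selected route at some moment after round $k$, so $J_k$ applies to $R_{|v_1}$; if $R$ is spurious then so is $R_{|v_1}$, so $R_{|v_1}$ has at least $k+2$ nodes and $R$ has at least $k+3>(k+1)+1$ nodes. Combined with (a), $J_n$ says that after round $n$ no spurious route survives anywhere.

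For the layering, I would attach to each non-attacker $i$ a \emph{target route} $r_i$ by the downward recursion: $r_i$ is the $\leq_i$-best among the empty route $\emptyset$ and all routes $(i)\cdot r_j$ with $j$ a neighbor of $i$ such that $r_j\in E^{j,i}$ and $i\notin r_j$. For shortest-path rankings this recursion is well-founded, since the next hop $j^{*}$ realizing $r_i$ satisfies $|r_{j^{*}}|=|r_i|-1$ (no $(i)\cdot r_j$ can be shorter than $r_i$). Put $\lambda(i)=|r_i|-1$ when $r_i\neq\emptyset$, and for the remaining nodes set $\lambda(i)$ to one more than the largest level among the relevant neighbors, or $n$ if there is none; by (a), $\lambda(i)\le n$ for every $i$. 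Each $r_i$ is perceivable, and it coincides with a best perceivable route in the benign case, though one must be prepared for $r_i$ to be strictly $\leq_i$-worse than $i$'s best perceivable route when a next hop declines to export its chosen route to $i$. The inner induction claims: \emph{by the end of round $k$, every node $i$ with $\lambda(i)\le k$ has permanently committed to $r_i$ and continually announces it to each neighbor it exports $r_i$ to.} The base case is immediate since $d$ announces itself at time $0$. For the step, a node $i$ with $\lambda(i)=k+1$ has best next hop $j^{*}$ with $\lambda(j^{*})=k$; by the inductive hypothesis $j^{*}$ has committed to $r_{j^{*}}$ and announces it to $i$, so within round $k+1$ node $i$ receives $r_i=(i)\cdot r_{j^{*}}$, which has at most $k+1$ nodes. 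By $J_k$, every \emph{other} route $i$ currently sees is either perceivable---hence no shorter than $r_i$, and strictly $\leq_i$-worse than $r_i$ unless it shares $r_i$'s next hop, using that ties are allowed only between same-next-hop routes---or has more than $k+1$ nodes and so is strictly worse under the shortest-path ranking. Hence $i$ commits to a route tied with and sharing a next hop with $r_i$ and never changes again. Since $\lambda(i)\le n$ for all $i$, every node has committed by the end of round $n$, giving convergence within $n$ asynchronous rounds; the $r_i=\emptyset$ nodes are caught by the same argument once their relevant neighbors have stabilized, or directly by $J_n$.

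The step I expect to be the main obstacle is making the target-route recursion cohere with the asynchronous dynamics: one must verify that the best next hop of a level-$(k+1)$ node genuinely sits at level $k$, that the route this next hop commits to is exactly the one whose one-hop extension $i$ ultimately takes, and---crucially---that no later-stabilizing neighbor can divert $i$ in the interim, which is precisely what $J_k$ rules out. The ``ties only between same-next-hop routes'' hypothesis has to be invoked with care whenever several perceivable routes of equal length compete, and the interleaving of the two inductions together with the exact round constants (and the low-level bookkeeping forced by arbitrary initial in-flight messages, unbounded delays, and dropped updates) is what makes the bound come out to exactly $n$ rounds; all of this delicacy is carried by the shortest-path structure, which is why the result is as clean as it is.
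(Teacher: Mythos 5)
Your architecture closely mirrors the paper's: your target-route recursion $r_i$ is exactly the output of the paper's iterative \textbf{Fix Shortest Routes} procedure, and your inner induction on rounds is the paper's induction on route length. But there is a genuine gap in the inner induction step, and it is located exactly where you flag the ``main obstacle.'' You correctly observe two sentences earlier that $r_i$ can be strictly $\leq_i$-worse (in particular, strictly \emph{longer}) than $i$'s best perceivable route, because a next hop $m$ may decline to export its committed route $r_m$ to $i$ even though some \emph{other} perceivable route at $m$ is exportable to $i$. Yet the induction step then asserts that every perceivable route $i$ sees is ``no shorter than $r_i$, and strictly $\leq_i$-worse than $r_i$ unless it shares $r_i$'s next hop.'' That assertion is false precisely in the situation you yourself anticipated: $i$ can have a perceivable route through $m$ that is strictly shorter than $r_i$, and a not-yet-committed neighbor can legitimately be announcing such a route at the end of round $k+1$. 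Your invariant $J_k$ does not exclude this, because $J_k$ only purges \emph{non-perceivable} routes; it says nothing about a node with $\lambda(m)>k$ continuing to announce a \emph{short but perceivable} route that is not the route it will eventually commit to. So the claim that $i$ commits to (a tie-mate of) $r_i$ in round $\lambda(i)$ does not close.

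The fix is to replace $J_k$ with the stronger dichotomy the paper's induction hypothesis carries: \emph{after $k$ rounds, every non-attacker has either permanently committed to its target route (which then has at most $k+1$ nodes), or it announces to its neighbors only routes with more than $k+1$ nodes.} This must be proved simultaneously with the layering claim, since the ``only long routes'' half for a node $m$ with $\lambda(m)>k$ is established by observing that every neighbor $p$ of $m$ that could supply $m$ a short route has already committed and issued its final announcement (or withdrawal) toward $m$ --- which is the commitment half of the same induction one level down. With that combined hypothesis, the diversion scenario is ruled out: at the end of round $k+1$ every route $i$ sees from a committed neighbor is that neighbor's final exported route (hence accounted for in the recursion defining $r_i$), and every route from an uncommitted neighbor is too long to compete. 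Perceivability alone, which is what $J_k$ tracks, is necessary but not sufficient.
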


\begin{proof} Before diving into the proof, we provide the reader with intuition.  Our strategy is to design an iterative algorithm called \textbf{Fix Shortest Routes} (FSR) that fixes nodes to the shortest routes that are available to those nodes abiding by all aspects of export policies and consistency with respect to non-adversarial nodes.  With every iteration of this algorithm, we fix a single node to a route as explained above and add that node to a set of $\mathcal{I} \subseteq V$ which consists only of nodes who would immutably announce the same route once they have been added to that set.  Note that at the very beginning of this algorithm, $\mathcal{I}$ contains only fixed-route attackers and $d$.

We first describe FSR.  It starts with $\mathcal{I}$ containing only $S_A$ and $d$.  In FSR, while there is at least one node $i \notin \mathcal{I}$ such that $BPR^i$ is not empty, the following steps are repeated:
\begin{enumerate}
	\item{over all $i \notin \mathcal{I}$ for which $PR^i \neq \emptyset$, select an arbitrary node $i \notin \mathcal{I}$ with the shortest route in $PR^i$ such that $Nxt^i \in \mathcal{I}$;}
 	\item{add $i$ to $\mathcal{I}$;}
 	\item{from all nodes' $PR$ sets, remove all routes  that contain $i$ but that violate $i$'s export policies with respect to or whose suffix at $i$ is not the route $i$ is fixed to;}
 	\item{all nodes with empty $PR$ sets get added to $\mathcal{I}$.}
\end{enumerate}
FSR outputs all nodes in $\mathcal{I}$ and their stable routes, excluding fixed-route attackers and $d$.

First notice that FSR fixes every node to a route (possibly the empty route $\emptyset$). We prove the theorem by proving the following lemma:

\begin{lemma}
For any arbitrary non-adversarial node $i$ who gets fixed to a non-empty route of length $k$ with FSR, $i$ will stabilize to the same route within $k$ asynchronous rounds from any initial routing configuration.
\end{lemma}
\begin{proof}
We prove this lemma by induction. 

\vspace{0.05in}\noindent{\bf Induction hypothesis.} For any initial routing configuration, for any arbitrary non-adversarial node, if that node does not stabilize to a route after $k$ asynchronous rounds, then after $k$ asynchronous rounds it will not announce a route of length at most $k$ to any of its neighbors, but if that node gets fixed to a route of length at most $k$ with FSR, then that node will stabilize to the same route within $k$ asynchronous rounds.

\vspace{0.05in}\noindent{\bf Base of the induction.} Consider an arbitrary routing configuration and suppose that all fixed-route attackers in $S_A$ announce their fixed routes from the start.  Suppose that, W.L.O.G., there is at least one non-attacker node in the routing system whose $PR$ set consists of a route of length $1$.  For the base case, consider an arbitrary, node $i\notin S_A$,  whose routes in $BPR^i$ are of length $1$, so this node gets fixed to a one-hop route with FSR by construction.  During BGP's execution, regardless of whether $i$'s shortest routes contain a fixed-route attacker or not, $i$ must learn of all such routes within a single asynchronous round (since it is only one-hop away from $d$ or a fixed-route attacker (running a prefix-hijacking attack), or both).  Notice that this is true for any initial routing configuration because after one asynchronous round all nodes who thought they were one hop away from $d$ mistakenly, will have realized their misconception and have stopped announcing one-hop routes to their neighbors, while nodes who mistakenly thought they were farther away from $d$ than one hop, will have realized that they have direct perceivable routes.  This holds for any activation schedule since all perceivable, one-hop routes will have been revealed to all such nodes within a single asynchronous round.  Thus, after a single asynchronous round, $i$ is guaranteed to stabilize to its best, shortest route in its $PR^i$, which is exactly what it is fixed to with FSR.  

\vspace{0.05in}\noindent{\bf Induction step.}  Consider an arbitrary node $j$ who gets fixed to a route of length $k+1$ with FSR.  $j$'s next hop on this route must have been fixed to a $k$-hop route by FSR which, by induction hypothesis, must be the same route that node converges to within $k$ asynchronous rounds.

Within the $(k+1)^{\textit{th}}$ asynchronous round, $j$ must receive an announcement from all its neighbors who have stabilized to their best, shortest routes of length $k$ and who are willing to export their routes to $j$.  $j$ could also learn of routes from neighbors who do not stabilize to routes within $k$ asynchronous rounds.  However, by induction hypothesis, such neighbors would have to be announcing routes of length greater than $k$ and be fixed to routes longer than $k$ in FSR, so $j$ can ignore these routes since they are not in $BPR^j$.  Note that since step 4 of FSR ensures that nodes get fixed to routes that are exportable and consistent with respect to the routes that other nodes are being fixed to and in non-decreasing order of route length, during BGP's execution $j$ will learn of exactly the same routes from these neighbors during $(k+1)^{\textit{th}}$ asynchronous round as the shortest routes in $PR^j$ when $j$ is being fixed to its route during FSR.  Hence, within $k+1$ asynchronous rounds, $j$ will have learned of all shortest routes that could be available to $j$ considering export rules and route consistency.  This must hold for any initial routing configuration because after the $(k+1)^{\textit{th}}$ asynchronous round, all nodes who thought they were $k+1$ hops away from $d$ mistakenly, will have realized their misconception and have stopped announcing routes of length at most $k+1$ to their neighbors.  This is true because all nodes who have stabilized to $k$-hop routes will have announced their routes to all appropriate neighbors by the end of this round while those who have not stabilized to $k$-hop routes will stop announcing routes of length at most $k$ by induction hypothesis.  For the same reason, nodes who mistakenly thought they were farther away from $d$ than $k+1$  hops, will have learned of all their $(k+1)$-hop perceivable routes .  Note that this must hold for any activation schedule, since all perceivable $(k+1)$-hop routes will have been revealed to all such nodes within the $(k+1)^{\textit{th}}$ asynchronous round, at which point  $j$ is guaranteed to stabilize to its most preferred route in $BPR^j$ via a tie-breaking rule, which is exactly the route it is fixed to with FSR.

\end{proof}

The Theorem then follows because no node can have a $PR$ set with shortest routes of length greater than $n$.

\end{proof}

\section{Commercial Internet Routing Systems}
Recall that perceivable routes of any node $i \notin S_A$ are the routes that $i$ could ever perceive to exist if they are propagated, hop by hop, towards $i$ from either $d$ or an attacker $A \in S_A$.  Also, recall that
ties between perceivable routes of any node $i \notin S_A$, occur only when the next hop of those routes is the same, in which case $i$ is allowed to select its most preferred route using an arbitrary but consistent tie-breaking rule.

\begin{thm} \label{thm_GR_convergence_time}
Every commercial-routing system is convergent for every choice of fixed-route attackers in $S_A \subseteq V$.  Moreover, convergence to a stable state is guaranteed within (2x+1) asynchronous rounds, where x is the height of the customer-provider hierarchy.
\end{thm}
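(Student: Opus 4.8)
The plan is to follow the proof of Theorem~\ref{thm_SP_convergence_time}. I would define \textbf{Fix Commercial Routes} (FCR), the commercial analogue of FSR: start with $\mathcal{I} = S_A \cup \{d\}$, and while some node $i \notin \mathcal{I}$ has $BPR^i \neq \emptyset$, (1) pick such an $i$ whose best next hop already lies in $\mathcal{I}$ and fix $i$ to its current best perceivable route; (2) add $i$ to $\mathcal{I}$; (3) delete from every node's $PR$-set every route that goes through $i$ but whose suffix at $i$ differs from $i$'s fixed route or that $i$ is unwilling to export along the route; (4) add to $\mathcal{I}$ (fixed to $\emptyset$) every node whose $PR$-set has become empty. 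Steps (3)--(4) keep the already-fixed routes mutually consistent with all rankings and export policies, so it is enough to show (a) FCR fixes every node, and (b) from any initial configuration BGP drives each non-attacker node to its FCR-route within $2X+1$ asynchronous rounds.

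For (a), the one genuinely new point compared with Theorem~\ref{thm_SP_convergence_time} is that FCR never stalls: whenever a node outside $\mathcal{I}$ has a nonempty $PR$-set, some such node has its best next hop already inside $\mathcal{I}$. Suppose not. Then the ``best next hop'' map carries every remaining node to another remaining node --- the suffix of a node's best perceivable route at its own next hop is again perceivable there, since the attacker (if any) lies strictly beyond the next hop --- so this map contains a directed cycle $u_1 \to u_2 \to \cdots \to u_m \to u_1$. If $BPR^{u_j}$ is a customer route for some $j$, then $u_{j+1}$ is a customer of $u_j$ and the suffix of $u_j$'s route at $u_{j+1}$ is exported by $u_{j+1}$ to its provider $u_j$, hence is a customer route of $u_{j+1}$; so $BPR^{u_{j+1}}$ is a customer route too, and going around the cycle every edge points from a node to one of its customers --- a customer-provider cycle, which is forbidden. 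Otherwise no $BPR^{u_j}$ is a customer route; but a peer-route best route at $u_j$ would have, at $u_{j+1}$, a suffix that $u_{j+1}$ exports to its peer $u_j$ and hence is a customer route of $u_{j+1}$, a contradiction --- so every $BPR^{u_j}$ is a provider route, making every cycle edge point from a node to one of its providers, again a customer-provider cycle. Thus FCR always has a legal choice, and it terminates with every node fixed.

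For (b), I would sort FCR-fixed nonempty routes by their Gao--Rexford shape: an export-rule argument classifies each as a \emph{customer route} (all downhill, reaching $d$ or the unique attacker on it), a \emph{peer route} (one peer hop into a customer route of the neighbor, or into an attacker), or a \emph{provider route} (an uphill chain, then at most one peer hop, then a downhill tail), and acyclicity caps both the downhill depth and the uphill depth at about $X$. The induction then has two phases, each modeled on the lemma inside the proof of Theorem~\ref{thm_SP_convergence_time}. Phase~1 covers rounds $1,\dots,X$ and handles customer-route nodes in increasing downhill depth: if $u$ is fixed to a depth-$k$ customer route, its FCR next hop is fixed to the depth-$(k-1)$ suffix and is stable by round $k-1$ by the inductive hypothesis, while the strengthened hypothesis --- that after round $k$ a node not stabilizing to a customer route no longer announces customer routes of downhill depth $\le k$ --- kills persistent phantom customer routes, so $u$ stabilizes by round $k \le X$. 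Phase~2 covers rounds $X+1,\dots,2X+1$: once customer routes are final, a peer-route node stabilizes one round later (its peer neighbor is already stable, and it can no longer be lured into a customer route), and provider-route nodes then stabilize ``upward'' along their uphill chains, one further round per uphill hop; since an uphill chain has at most about $X$ hops this is done by round $2X+1$. The destination and the attackers are stable from the outset, which only tightens these bounds.

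The part I expect to absorb most of the effort is (b): pinning the bound at exactly $2X+1$, rather than a loose $O(X)$, requires stating the strengthened induction hypotheses sharply --- the per-depth ``no longer announces'' clauses and the careful accounting across the Phase~1/Phase~2 boundary --- and it requires handling routes that pass through an attacker, where everything past the attacker obeys no export rule but is also harmless because the attacker is already stable. By contrast, part (a) is short once one notices that every candidate ``best next hop'' cycle must have all its edges pointing to customers or all its edges pointing to providers, either way contradicting acyclicity of the customer-provider hierarchy.
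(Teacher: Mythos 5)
Your proposal is correct in outline and takes essentially the same route as the paper: an iterative algorithm that fixes nodes one at a time to their best surviving perceivable routes, a no-stall argument that bottoms out in the acyclicity of the customer-provider hierarchy, and the identical three-tier round count (customer-route nodes stable by round $X$, peer-route nodes by $X+1$, provider-route nodes by $2X+1$). The only packaging differences are that you run a single greedy loop where the paper runs three explicit subroutines (fixing all customer routes, then peer routes, then provider routes), and your no-stall argument via a monochromatic cycle in the best-next-hop map unifies into one step what the paper proves as two separate chain-following lemmas.
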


\begin{proof}
The proof follows from Lemmas \ref{lem_get_stable_provider}-\ref{lem_stabilize_customers} below.
\end{proof}

Before delving into the proof, we provide the reader with an outline as follows.  Our strategy is to design an iterative algorithm called \textbf{Fix Routes} (FR) that fixes nodes to the best routes that are available to those nodes abiding by all aspects of the commercial routing and consistency with respect to non-adversarial nodes.  With every iteration of this algorithm, we fix a single node to a route as explained above and add that node to a set of $\mathcal{I} \subseteq V$ which consists only of nodes who would immutably announce the same route once they have been added to that set.  Note that at the very beginning of this algorithm, $\mathcal{I}$ contains only fixed-route attackers and $d$.

FR consists of three subroutines: \textbf{Fix Customer Routes} (FCR), \textbf{Fix Peer Routes} (FPeeR), and \textbf{Fix Provider Routes} (FPrvR), that FR executes in that order.  By adding them one-by-one to $\mathcal{I}$, FR fixes all nodes with customer, peer and provider routes with FCR, FPeeR and FPrvR respectively.  In Lemma \ref{lem_convergence}, we then show that every node in $\mathcal{I}$ thus-constructed, in fact stabilizes to the same route in BGP as the route that node has been fixed to by FR.  We conclude with Lemmas \ref{lem_stabilize_providers}-\ref{lem_stabilize_customers}, which collectively show that every node converges to that route within $2x+1$ asynchronous rounds.

We now describe FR and its subroutines.  FR starts with FCR; at this point $\mathcal{I}$ contains $S_A$ and $d$.  In FCR, while there is at least one node $i \notin \mathcal{I}$ such that $BPR^i$ contains at least one customer route, the following steps are repeated:
\begin{enumerate}
	\item{select an arbitrary node $i \notin \mathcal{I}$ such that $BPR^i$ contains a customer route;}
 	\item{find a node $j \notin \mathcal{I}$ whose most preferred route in $BPR^j$ is a customer route and $Nxt^j \in \mathcal{I}$;}
 	\item{add $j$ to $\mathcal{I}$;}
 	\item{from all nodes' $PR$ sets, remove all routes  that contain $j$ but whose suffix at $j$ is not what $j$ is fixed to above;}
 	\item{all nodes with empty $PR$ sets get added to $\mathcal{I}$.}
\end{enumerate}
FR outputs all nodes in $\mathcal{I}$ and their stable routes, excluding fixed-route attackers and $d$.

Before proceeding any further, in Lemma \ref{lem_get_stable_provider} we show that in step 2 of FCR we are guaranteed to find a node whose most preferred route in its $BPR$ set contains a customer route only one hop away from a node in $\mathcal{I}$ as long as there is at least one node $i \notin \mathcal{I}$ such that $BPR^i$ contains a customer route.

\begin{lem} \label{lem_get_stable_provider} Consider an arbitrary node $i \notin S_A$ such that $BPR^i$ contains a customer route $R$.  There exists at least one node $j \notin \mathcal{I}$ whose most preferred route in $BPR^j$ is a customer route and $Nxt^j \in \mathcal{I}$. \end{lem}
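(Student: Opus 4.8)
The plan is to argue by contradiction: suppose no node $j \notin \mathcal{I}$ has both a best perceivable route that is a customer route and a next hop $Nxt^j \in \mathcal{I}$. Call a node $j$ \emph{good} if $j \notin \mathcal{I}$ and every route in $BPR^j$ is a customer route. I would first record an easy equivalence: a node $j \notin \mathcal{I}$ is good iff $j$ has \emph{some} perceivable customer route --- indeed, commercial rankings place every customer route above every peer or provider route, and ties in $\leq_j$ occur only between routes with a common next hop, so once $BPR^j$ contains one customer route it contains only customer routes. The hypothesis supplies a good node $i$ (read with $i \notin \mathcal{I}$, the form in which the lemma is applied, since $S_A \subseteq \mathcal{I}$ throughout), while the contradiction hypothesis says every good node $j$ satisfies $Nxt^j \notin \mathcal{I}$.

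The core step is: \emph{if $j$ is good then $Nxt^j$ is good.} Fix a customer route $Q \in BPR^j$ and set $c := Nxt^j = succ(j,Q)$, which is a customer of $j$. By the contradiction hypothesis $c \notin \mathcal{I}$, hence $c \notin S_A$ and $c \neq d$, so $c$ is a non-attacker source node equipped with a commercial ranking and export policy. Since $Q$ is perceivable at $j$, the edge-by-edge export conditions in the definition of perceivable route hold along $Q$; restricting them to the suffix $Q_{|c}$ shows that $Q_{|c}$ is perceivable at $c$ --- both clauses of the definition transfer to a suffix, and for the single-attacker clause one uses that $c$ is not the attacker on $Q$ (attackers always lie in $\mathcal{I}$). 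Applying the perceivability condition to the edge $(j,c)$ gives $Q_{|c} \in E^{c,j}$; but $j$ is a \emph{provider} of $c$, so $c$'s commercial export policy forces $E^{c,j}$ to contain customer routes of $c$ only. Thus $Q_{|c}$ is a perceivable customer route at $c$, and by the equivalence above $c$ is good.

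Finally, iterate the core step starting from $i$: the sequence $i,\ Nxt^i,\ Nxt^{Nxt^i},\ \ldots$ consists entirely of good nodes, and each term is a customer of the preceding one. Since the customer--provider relation is acyclic on a finite vertex set, this sequence cannot be infinite (nor can it close into a cycle). The only way it can fail to extend is when the current good node $j$ has $Nxt^j \in \mathcal{I}$ --- and such a $j$ is exactly the node the lemma asserts exists.

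The step I expect to be the main obstacle is the core step, and within it the claim that $c$'s \emph{best} perceivable route --- not merely some perceivable route --- is a customer route. This is where the two Gao--Rexford restrictions must be used in tandem: the export restriction manufactures a perceivable customer route at $c$ (because $j$ is $c$'s provider), and the ranking restriction then promotes it to top rank. The only remaining care is the routine check that a suffix of a perceivable route is perceivable, including the single-attacker case, which is harmless here because the relevant next hop $c$ is never the attacker.
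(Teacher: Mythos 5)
Your proof is correct, and it reaches the conclusion by a genuinely different decomposition than the paper's. The paper's argument is a ``closest deviating node'' chase: starting from $i$'s customer route $R$, it finds a node on $R$ that prefers some other route $R'$ over its suffix of $R$, observes (via the Gao--Rexford \emph{ranking} restriction) that $R'$ must again be a customer route, and repeats, jumping possibly several hops at a time; termination follows because each jump moves strictly down the customer--provider hierarchy and customer--provider cycles are forbidden. Your argument instead walks the chain $i, Nxt^i, Nxt^{Nxt^i},\ldots$ one hop at a time, with the explicit invariant that each node in the chain is ``good,'' and the engine of the step is the \emph{export} restriction: since $j$ is a provider of $c=Nxt^j$, the suffix $Q_{|c}\in E^{c,j}$ must be a customer route of $c$, which the ranking restriction then promotes to the top of $BPR^c$. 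Both proofs ultimately invoke the same two Gao--Rexford conditions and the same acyclicity-based termination, but your version buys a fully static, combinatorial argument about the $PR$/$BPR$ sets with a clean inductive invariant, whereas the paper's proof is shorter but leans on informal dynamic language (routes ``not always available,'' nodes being ``fixed'') that conflates the algorithm FR with BGP's execution, and leaves the key fact that every suffix of a perceivable customer route is itself a customer route implicit. Your explicit check that suffixes of perceivable routes remain perceivable (including the single-attacker clause, harmless because $c\notin S_A$) fills a gap the paper glosses over.
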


\begin{proof}
We consider an arbitrary routing configuration and suppose that all fixed-route attackers announce their fixed routes from the start.
If $i$ is fixed to $R$, then we are done.  Otherwise, $R$ must not always be available to $i$, so there must be some node $j$ in $R$ who prefers some other route $R'$ to its suffix of $R$.  Consider the closest such node $j$ to $i$, suppose that $R'$ is $j$'s most preferred route, and note that $R'$ must be a customer route.  If $j$ is fixed to $R'$, then we are done.  If not, then $R'$ must not be always available to $j$, so there must be at least one node $k$ on $R'$ that prefers some route $R''$ to its suffix of $R'$.  Consider the closest such node $k$ to $j$, suppose that $R''$ is $k$'s most preferred route, and note that $R''$ must be a customer route.  If $k$ is fixed to $R''$, then we are done.  Otherwise, $R''$ must not be always available to $k$, so we can find at least one node on that route whose suffix is not that node's most preferred route.  Since $n$ is finite, we can continue this argument until we either form a customer-provider cycle or we reach a node whose $BPR$ set contains a direct customer route to a node in $S$ that it is fixed to.  Note that the customer-provider cycle that we obtain in the former case cannot contain a fixed-route attacker since this cycle was obtained by repeatedly considering nodes whose most preferred routes were not always available to them (the routes that fixed-route attackers announce are always available to the neighbors they are being announced to).  Therefore, only the latter case is possible since our routing systems do not allow for a node to be its own indirect customer.
\end{proof}

We now describe FR's operations during execution of FPeeR.  This subroutine starts with $\mathcal{I}$ and the configuration of the routing system the way it is after execution of FCR, i.e. $\mathcal{I}$ contains only fixed-route attackers, $d$, and nodes with empty and customer routes.   In FPeeR, while there is at least one node $i \notin \mathcal{I}$ such that $BPR^i$ contains at least one peer route with the next hop being in $\mathcal{I}$, the following steps are repeated:
\begin{enumerate}
	\item{select an arbitrary node $i \notin S$ such that $Nxt^i$ is in $\mathcal{I}$, and either has a customer route or is $d$ or a fixed-route attacker;}
 	\item{find a node $i \notin \mathcal{I}$ whose most preferred route in $BPR^i$ is a peer route and $Nxt^i \in \mathcal{I}$;}
 	\item{from all nodes' $PR$ sets, remove all peer and customer routes  that contain $i$, and remove all provider routes whose suffix at $i$ is not what $i$ is fixed to above;}
 	\item{all nodes with empty $PR$ sets get added to $\mathcal{I}$.}
\end{enumerate}
\noindent FPeeR outputs all nodes in $\mathcal{I}$ and their stable routes, excluding fixed-route attackers, $d$ and all nodes output by FCR.

We now describe FR's operations during execution of FPrvR.  This subroutine starts with $\mathcal{I}$ and the configuration of the routing system the way it is after the consecutive execution of FCR and FPeeR in that order, i.e. $S$ contains fixed-route attackers, $d$, nodes with empty, customer and peer routes.   In FPrvR, while there is at least one node that is not in $\mathcal{I}$ and with a provider route in its $BPR$, the following steps are repeated:
\begin{enumerate}
	\item{select an arbitrary node $i \notin S$ such that $BPR^i$ contains a provider route;}
	 \item{find a node $j \notin \mathcal{I}$ whose most preferred route in $BPR^i$ is a provider route and $Nxt^j \in \mathcal{I}$;}
 	\item{add $j$ to $\mathcal{I}$;}
 	\item{from all nodes' $PR$ sets, remove all peer and customer routes  that contain $j$, and remove all provider routes whose suffix at $j$ is not what $j$ is fixed to above;}
 	\item{all nodes with empty $PR$ sets get added to $\mathcal{I}$.}
\end{enumerate}
\noindent FPrvR outputs all nodes in $\mathcal{I}$ and their stable routes, excluding fixed-route attackers, $d$ and all nodes output by FCR and FPeeR.

Before proceeding any further, in Lemma \ref{lem_get_stable_customer} we show that in step 2 of FPrvR we are guaranteed to find a node whose most preferred route in its $BPR$ set contains a provider route only one hop away from a node in $\mathcal{I}$ as long as there is at least one node $i \notin \mathcal{I}$ such that $BPR^i$ contains a provider route.

\begin{lem} \label{lem_get_stable_customer} Consider an arbitrary node $i \notin S_A$ such that $BPR^i$ contains a provider route $R$.  There exists at least one node $j \notin \mathcal{I}$ whose most preferred route in $BPR^j$ is a provider route and $Nxt^j \in \mathcal{I}$.\end{lem}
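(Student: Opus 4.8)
The plan is to mirror the proof of Lemma~\ref{lem_get_stable_provider}, but to run the ``chase'' along \emph{provider} edges (i.e.\ up the customer--provider hierarchy) rather than along customer edges. First, since $BPR^i$ is the set of top-ranked perceivable routes at $i$, since commercial rankings rank every customer route strictly above every peer and provider route, and since ties in $\leq_i$ are permitted only between routes sharing a next hop, the hypothesis that $BPR^i$ contains a provider route $R$ forces \emph{every} route in $BPR^i$ to be a provider route, all of them through a single provider-neighbor $Nxt^i$ of $i$. If $Nxt^i\in\mathcal{I}$ we are done with $j=i$, so from now on assume $Nxt^i\notin\mathcal{I}$; note this also forces $i\notin\mathcal{I}$, since a node already in $\mathcal{I}$ at the start of FPrvR is fixed to an empty, customer, or peer route and hence has no provider route at the top of its $BPR$.

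The crux is the following structural claim: at the moment FPrvR is invoked (after FCR and FPeeR have terminated), every node $w\notin\mathcal{I}$ with $PR^w\neq\emptyset$ has a provider route as its most preferred perceivable route. Indeed, $w$ has no perceivable customer route, because FCR runs until no node outside $\mathcal{I}$ has a customer route in its $BPR$, and the subroutines only ever delete routes. Suppose, toward a contradiction, that $w$'s best perceivable route is a peer route $\tilde R$ with next hop $q$. Since $(w,q)$ is a peer edge, the valley-free structure of $\tilde R$ forces the edge leaving $q$ on $\tilde R$ to be a provider-to-customer edge, so $\tilde R_{|q}$ is a customer route for $q$; it is perceivable at $q$ (immediate from the definition, as the export conditions along $\tilde R_{|q}$ are inherited from those along $\tilde R$), and it still lies in $PR^q$, because any deletion of $\tilde R_{|q}$ in step~3 of FCR or FPeeR (triggered by some fixed node on it whose fixed route disagrees with the corresponding suffix) would also have deleted $\tilde R$, contradicting $\tilde R\in PR^w$. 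Hence $q$ has a surviving perceivable customer route, so $q\in\mathcal{I}$ by the end of FCR, so FPeeR's loop condition would have been met at $w$ and $w$ would have been added to $\mathcal{I}$ --- a contradiction. (If instead $PR^w=\emptyset$, step~4 would have placed $w$ in $\mathcal{I}$, again a contradiction.)

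With the structural claim in hand, define the chase $i=w_0,w_1,w_2,\ldots$ by $w_{k+1}=Nxt^{w_k}$, and stop at the first index $K$ with $Nxt^{w_K}\in\mathcal{I}$. As long as the chase has not stopped, $w_k\notin\mathcal{I}$, so by the structural claim $w_k$'s best perceivable route is a provider route and $w_{k+1}=Nxt^{w_k}$ is a provider of $w_k$. Thus $w_0,w_1,w_2,\ldots$ is a strictly ascending chain in the acyclic customer--provider partial order, whose height is $x$, so the chase stops after at most $x$ steps, at a node $w_K\notin\mathcal{I}$ whose best perceivable route is a provider route and whose next hop lies in $\mathcal{I}$. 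Taking $j=w_K$ proves the lemma.

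I expect the main obstacle to be the structural claim, and within it the bookkeeping that (a)~a peer route surviving until FPrvR, combined with valley-freeness, forces its next hop to have entered $\mathcal{I}$ already during FCR, and (b)~the route-deletion steps never destroy this property, since deleting a route that would have placed a node in $\mathcal{I}$ also deletes every route extending it. The termination part is routine and parallels the ``customer--provider cycle is impossible'' argument of Lemma~\ref{lem_get_stable_provider}; the extra mileage here, useful for the later $2x+1$ bound, is that acyclicity of the hierarchy not only terminates the chase but caps its length at $x$.
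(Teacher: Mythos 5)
Your proof takes essentially the same route as the paper's: a chase up the customer--provider hierarchy along preferred provider routes, terminated by the acyclicity of that hierarchy, with your ``structural claim'' being a careful elaboration of the paper's parenthetical remark that a deviating node's most preferred route must be a provider route ``otherwise it would have been fixed by now with FCR or FPeeR.'' Your write-up is in fact more explicit than the paper's --- the next-hop chase makes the strictly ascending chain transparent, and you actually justify why a surviving peer route forces its next hop into $\mathcal{I}$ --- and the one caveat, namely that the structural claim is argued only for the state at the moment FPrvR is invoked even though step 2 of FPrvR needs the lemma at every iteration of its loop (after further route deletions, which could in principle promote a surviving peer route to the top of some node's ranking), is a gap you share with the paper's own proof.
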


\begin{proof}
We consider an arbitrary routing configuration and suppose that all fixed-route attackers announce their fixed routes from the start.
If $i$ is fixed to $R$, then we are done.  Otherwise, there must be at least one node $j$ in $R$ who prefers some other route $R'$ to its prefix in $R$.  Consider the closest such node $j$ to $i$, suppose that $R'$ is $j$'s most preferred route, and note that $R'$ must be a provider route (otherwise it would have been fixed by now with FCR or FPeeR).  We can use the argument in Lemma \ref{lem_get_stable_provider}  but consider only provider routes instead of only customer routes until either a customer-provider cycle is obtained (which would contradict our routing system set up) or we reach a node whose $BPR$ set contains a direct provider route to a node in $\mathcal{I}$ that it is fixed to.  
\end{proof}

\begin{lem} \label{lem_convergence} Every node in a commercial routing system is guaranteed to stabilize to the same route it gets fixed to in FR, for any activation schedule, for any initial routing configuration. \end{lem}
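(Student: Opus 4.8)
The plan is to follow the same template as the proof of Theorem~\ref{thm_SP_convergence_time}, but to run the induction on the order in which FR adds nodes to $\mathcal{I}$ rather than on route length. Write $\mathcal{I}_0 = S_A \cup \{d\}$, and let $v_1, v_2, \ldots$ be the non-attacker source nodes in the order FR fixes them, where a node added in the final ``empty $PR$'' step of a subroutine is regarded as fixed to $\emptyset$. I would prove by induction on $m$ the following invariant: \emph{for every fair schedule and every initial configuration there is a finite time after which each node in $\mathcal{I}_0 \cup \{v_1,\ldots,v_m\}$ permanently announces its FR-route to exactly the neighbors its export policy permits} (and nothing, or the empty route, to the rest). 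Lemma~\ref{lem_convergence} is the case $m=n$.

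For the base case, $d$ announces $(d)$ from the start and each attacker $A$ announces its fixed lists $R_{A,i}$ whenever activated, so by fairness there is a finite time after which all neighbors have received these and they never change again. For the inductive step, assume the invariant for $m$ and let $v := v_{m+1}$ be fixed by FR to a route $R$. If $R \neq \emptyset$, then $Nxt^v$ and every later node on $R$ lie in $\mathcal{I}_0 \cup \{v_1,\ldots,v_m\}$ with FR-routes equal to the corresponding suffixes of $R$; since $R \in PR^v$ forces $R_{|Nxt^v} \in E^{Nxt^v,v}$, the induction hypothesis gives a finite time after which $Nxt^v$ permanently announces $R_{|Nxt^v}$ to $v$, so $R$ is permanently available to $v$ from then on. It then remains to show $v$ eventually selects $R$ and never deviates. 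Any route $Q$ with $Q >_v R$ that is ever announced to $v$ was either removed from $PR^v$ during FR strictly before $v$ was fixed — because $R$ is the top-ranked route of $v$'s surviving $PR^v$ (i.e.\ of $BPR^v$ at the moment $v$ is fixed) — or was never perceivable at all. In the first case the removal was triggered by some node $u$ on $Q$ being fixed earlier to a route other than $Q_{|u}$, and by the induction hypothesis $u$ permanently announces that other route, so after a finite time $v$ no longer perceives $Q$. In the second case one shows, by a short sub-induction on the hop-by-hop way announcements are assembled together with the fact that each node only forwards routes permitted by its export policy, that a route which stays announced to $v$ forever must in fact be perceivable; hence only the first case is relevant. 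Combining this with the monotonicity observation that every suffix of a route surviving in some node's $PR$ set also survives in the corresponding node's $PR$ set (so no surviving route passes through an $\emptyset$-fixed node), after a finite time the best route available to $v$ is exactly $R$, which $v$ selects forever. The commercial ranking keeps the comparison clean: in FCR (resp.\ the peer, provider phase) $R$ is a customer (resp.\ peer, provider) route, and any strictly-better competitor belongs to the same or a more-preferred class and would itself have been fixed earlier or already filtered. The case $R = \emptyset$ is the same argument with no surviving route left: $PR^v$ became empty, so by suffix-monotonicity every perceivable route of $v$ contains an earlier-fixed node with an inconsistent suffix, and the induction hypothesis makes all of them disappear.

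Two pieces of bookkeeping tie the phases together. First, I would invoke Lemmas~\ref{lem_get_stable_provider} and~\ref{lem_get_stable_customer} together with the no-indirect-self-customer assumption to know that FCR actually fixes every node retaining a perceivable customer route, and that FPrvR fixes every remaining node; this is what licenses processing the three route classes in order against the commercial ranking, since when the argument above reaches the peer or provider phase there are genuinely no surviving higher-class perceivable routes left. Second, fairness is used only to convert ``permanently announced'' into ``permanently received and processed'' within a bounded delay, which is all the qualitative claim needs — the quantitative $2x+1$ bound is deferred to Lemmas~\ref{lem_stabilize_providers}--\ref{lem_stabilize_customers}.

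The step I expect to be the crux is the transient-route control inside the inductive step: showing that \emph{every} route a node might momentarily prefer over its FR-route — including ``bogus'' routes built from stale announcements or from neighbors that have not yet stabilized — is eventually purged, and doing so via an induction anchored on FR's fixing order rather than on time, so no circularity creeps in. The sub-claim that a persistently-announced route is necessarily perceivable (so that FR's filtering truly accounts for all long-lived routes) is the delicate sub-part, since it must cope with routes carrying more than one attacker and with export policies being enforced at every hop.
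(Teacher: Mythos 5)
Your proposal is correct and takes essentially the same approach as the paper: the paper's proof likewise argues, phase by phase, that each node is fixed by FR to the best route that could ever be (persistently) available to it during BGP's execution, relying on step 4's filtering against earlier-fixed nodes and on the exportability of fixed routes. Your version is considerably more explicit --- notably the induction on FR's fixing order and the handling of transient, non-perceivable announcements --- which the paper's terse argument leaves implicit rather than contradicts.
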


\begin{proof}
Consider an arbitrary node $i$ fixed by FCR subroutine to a customer route.  Due to step $4$ of FCR, and because FCR fixes nodes to only exportable routes, $i$ must have been fixed to the best route that could ever be available to $i$ during BGP's execution for any starting configuration and any activation schedule.

Now consider an arbitrary node $i$ fixed by FPeeR subroutine to a peer route.  After execution of FCR, nodes considered in FPeeR must have no customer routes in their $BPR$ sets (step 4 of FCR), otherwise they would have been stabilized in FCR, so their favorite available routes must all be peer routes.  Therefore, due to step $4$ of FCR and step 3 of FPeeR, and because FPeeR fixes nodes to only exportable routes, $i$ must have been fixed to the best route that could ever be available to $i$ during BGP's execution for any starting configuration and any activation schedule.

Finally consider an arbitrary node $i$ fixed by FPrvR subroutine to a provider route.  After execution of FCR and then FPeeR, nodes considered in FPrvR must have neither customer nor peer routes in their $BPR$ sets (steps 4 and 3 of FCR and FPeeR respectively), otherwise they would have been stabilized in FCR or FPrvR, so their favorite available routes must all be provider routes.  Thus, due to steps $4$ of FCR and FPrvR as well as step 3 of FPeeR, and because FPrvR fixes nodes to only exportable routes, $i$ must have been fixed to the best route that could ever be available to $i$ during BGP's execution for any starting configuration and any activation schedule.
\end{proof}

In what follows, we show that during BGP's execution, all nodes who stabilize to customer, peer and provider routes, do so within $x$, $x+1$ and $2x +1$ asynchronous rounds respectively.

\begin{lem} \label{lem_stabilize_providers} All nodes in the output of FCR, stabilize within $x$ asynchronous rounds during BGP's execution. \end{lem}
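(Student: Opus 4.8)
The plan is to prove a stronger, induction-friendly statement by induction on the hop-length $k$ of the customer route that FCR assigns to a node, paralleling the inner lemma in the proof of Theorem~\ref{thm_SP_convergence_time}. First I would record the structural fact that makes FCR's customer output well behaved: whenever FCR fixes a node $j$ to a non-empty (customer) route $R$ in step~3, the next hop $Nxt^j$ already lies in $\mathcal{I}$, and during FCR the set $\mathcal{I}$ is grown from $S_A\cup\{d\}$ only by adding nodes FCR itself fixed to customer (or empty) routes and consistently, via step~4, with their fixed suffixes. Peeling this off repeatedly shows $R=(j,c_1,\ldots,c_m,t)$ with every edge $(c_\ell,c_{\ell+1})$ a customer edge and $t\in S_A\cup\{d\}$. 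Since the customer--provider relation is acyclic with hierarchy height $x$, such a descending chain of customer edges has at most $x$ hops; hence every route output by FCR has length at most $x$, and it suffices to show that a node fixed to a length-$k$ route stabilizes within $k$ asynchronous rounds.

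The induction hypothesis I would carry is the two-part statement: for every initial configuration and every fair schedule, after $k$ asynchronous rounds (i) every node that FCR fixes to a customer route of length at most $k$ has selected, and from then on keeps, exactly that route; and (ii) every node not yet stabilized after $k$ rounds announces no customer route of length at most $k$ to any neighbor. Part (ii) is the engine that prevents a node from later being lured into a short but spurious customer route inherited from the initial state. For the base case $k=1$: a length-$1$ customer route is a direct edge to $d$ or to a fixed-route attacker (e.g.\ a prefix hijacker); since $d$ and every attacker announce from time $0$, within one round every such node learns its direct route, and simultaneously every node that merely believed it held a direct route discards it on processing fresh updates, giving (ii). A node $j$ fixed to a length-$1$ route therefore sees, after round~$1$, exactly its perceivable length-$1$ customer route(s); anything it would rank above a customer route is itself a customer route, all perceivable length-$1$ routes are available, so $j$ selects its pinned route, and the commercial ranking together with consistent tie-breaking keeps it there.

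For the inductive step, take $j$ fixed to a customer route of length $k+1$ with next hop $c_1$. By the structural fact $c_1$ is fixed by FCR to a length-$k$ customer route, so by (i) it has stabilized by round $k$ and, a customer route being exportable to every neighbor, it delivers to $j$ in round $k+1$ exactly $j$'s pinned route. By (ii), any neighbor still unstable after $k$ rounds offers $j$ only non-customer routes or customer routes of length $>k+1$, none of which $j$ prefers to a length-$(k+1)$ customer route; and the only customer routes of length at most $k+1$ that can reach $j$ by round $k+1$ are the perceivable ones whose suffixes agree with FCR's fixings, i.e.\ precisely the routes FCR left in $PR^j$ when it processed $j$, whose top-ranked element (over a common best next hop) is $j$'s pinned route. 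Hence $j$ selects and retains it, establishing (i) at level $k+1$; and the purging of stale customer routes of length at most $k+1$ by nodes not yet stabilized establishes (ii) at level $k+1$. Taking $k=x$ and using that FCR's routes have length at most $x$ then yields the lemma (and nodes FCR fixes to $\emptyset$ shed their transient routes within $x$ rounds as well, since every route they could perceive runs through a node that FCR pins, within at most $x$ rounds, to something inconsistent with it).

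The main obstacle I anticipate is justifying, in the inductive step, the claim that "the customer routes of length at most $k+1$ reaching $j$ by round $k+1$ are exactly the perceivable ones that survive FCR's removals". This requires arguing that after the first round of clean propagation any route a neighbor actually announces is export-consistent and genuine hop-by-hop, hence perceivable, and that FCR's step-4 removals delete exactly those perceivable routes that cannot be simultaneously available once the earlier-pinned nodes are fixed --- essentially a simulation argument matching one FCR iteration to one asynchronous round, compounded with the transient-elimination bookkeeping of part (ii). The rest is routine once this correspondence and the acyclicity-based length bound are in place.
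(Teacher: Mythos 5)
Your proposal transplants the length-indexed induction from the proof of Theorem~\ref{thm_SP_convergence_time} into the commercial setting, but the inner claim it rests on --- that a node FCR fixes to a customer route of length $k$ stabilizes within $k$ asynchronous rounds --- is false here, and the failure occurs exactly at the step where you assert that an unstable neighbor ``offers $j$ only non-customer routes or customer routes of length $>k+1$, none of which $j$ prefers to a length-$(k+1)$ customer route.'' Commercial rankings constrain only the comparison between customer routes and peer/provider routes; they say nothing about the relative ranking of two customer routes, so $j$ may strictly prefer a long customer route over its short pinned one. Concretely, $j$ can be pinned to a two-hop customer route $(j,c,d)$ while preferring a customer route $(j,u,w,z,d)$ that FCR removed in step~4 because $z$ was fixed to a different suffix $(z,v,\ldots,d)$; during BGP's execution the announcement $(u,w,z,d)$ is transiently legitimate and only disappears after $z$'s switch propagates back up through $w$ and $u$, which takes on the order of the depth of that customer chain --- up to $x$ rounds --- not $2$ rounds. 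So the per-node bound of ``$k$ rounds for a length-$k$ route'' cannot be proved, and your induction hypothesis (ii), which tracks only \emph{short} spurious customer routes, does not exclude the preferred \emph{long} ones that actually cause the delay. A secondary imprecision: routes through attackers are not of length at most $x$ as sequences of nodes (the paper notes this explicitly), since the attacker's announced suffix is arbitrary; only the number of customer hops to the attacker is bounded by $x$, and your length-based hypothesis (ii) conflates the two.

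The paper avoids this trap by not attempting a per-length bound at all. It first invokes Lemma~\ref{lem_convergence} to know \emph{which} route each FCR node converges to, and then argues by contradiction: if some node needs more than $x$ rounds, one chases backward through the neighbors whose announcements it was waiting on, each stabilizing one round before the next; because all routes involved are customer routes, this delay chain is a chain of customer edges, which either closes into a forbidden customer--provider cycle or exceeds the hierarchy depth $x$ (counting the attacker's pre-announced suffix as available from time zero). If you want to salvage an inductive argument, the induction would have to be on position in the customer hierarchy (or on FCR's fixing order), not on route length, with a hypothesis strong enough to say that by round $k$ every customer route inconsistent with the first $k$ levels' fixed choices has been withdrawn --- which is essentially the paper's chain argument in a different guise.
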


\begin{proof}
 By Lemma \ref{lem_convergence}, every node who gets fixed in FCR to a customer route, converges to the same route during BGP's execution.  All customer routes that do not contain fixed route attackers, cannot be greater than $x$ in length.  Customer routes that contain fixed-route attackers can be longer than $x$, but fixed-route attackers announced their routes in the beginning of the protocol's execution.  For the sake of contradiction, suppose that there exists at least one node $i$ who requires more than $x$ asynchronous rounds to stabilize to a route $R$.  This means that there is at least one node $j$ who has selected its suffix $R_{|j}$ or who selected some other route that it does not export to $i$, only a single round before $i$ stabilized.  Otherwise $i$ could have selected $R$ earlier.  In either case, there must exist at least one node $k$ who has selected its suffix $R_{|j_{|k}}$ or who selected some other route that it does not export to $j$, only a single round before $j$ stabilized.  Otherwise $j$ could have selected $R_{|j}$ earlier.  Since $|V|$ is finite, we can continue this argument until we obtain either (i) a node whose next hop is in $S_A \cup \{d\}$, or (ii) customer-provider cycle containing no attackers.  The latter case would contradict the condition that no node can be its own indirect customer.

 Note that all routes considered in this argument are customer routes only and that the choices that nodes, that are guaranteed to converge to customer routes, make with respect to the routes that they are guaranteed to converge to are based only on the choices of their customers, whose choices depend on their customers, and so on.  This is so due to nodes' preferences and export policies in a commercial routing system.  As we follow the argument in this proof to reach case (i) we notice that for every node who converges to route whose length is more than the number of asynchronous rounds it took for that node to converge, we can extend a chain of customers that we have started at node $i$ who took more than $x$ asynchronous rounds to converge.  Thus, in case (i) we could construct a chain of customers that either does not contain an attacker and is longer than $x$ or does contain an attacker and is longer than $x+y$, where $y$ is the length of the route that the fixed attacker on that route announces.  This, however, contradicts the height of the hierarchy of the routing system, noting that attackers announce their routes at the beginning of the protocol's execution. \end{proof}

\begin{lem} \label{lem_stabilize_peers} All nodes in the output of FPeeR, stabilize within $x+1$ asynchronous rounds during BGP's execution. \end{lem}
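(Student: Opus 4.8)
The plan is to reduce this lemma to Lemma~\ref{lem_stabilize_providers} by exploiting the structure that commercial export policies impose on a peer route, and by combining it with Lemma~\ref{lem_convergence}. Since Lemma~\ref{lem_convergence} already tells us that every node $i$ that FPeeR fixes to a peer route $R$ does converge in BGP to exactly $R$ (for every initial configuration and every fair schedule), it remains only to bound the convergence time. The key structural observation is the following: if $R = (i,p,\ldots,d)$ is a peer route fixed by FPeeR, with next hop $p$ a peer of $i$, then when $i$ was fixed $p$ was already in $\mathcal{I}$, i.e. $p \in S_A\cup\{d\}$ or $p$ was fixed by FCR; and, moreover, the entire suffix $R_{|p}$ is a \emph{customer} route. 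Indeed, a commercial export policy lets $p$ announce to its peer $i$ only customer routes, so $R_{|p}$ begins with a customer edge of $p$; iterating this observation along $R_{|p}$ (each node exports its suffix to its provider/peer predecessor, hence that suffix is again a customer route) shows every edge of $R_{|p}$ is a customer edge. Thus every node strictly after $i$ on $R$ lies on a pure customer route and was handled by FCR, or is $d$ or an attacker.

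With this in hand I would argue by contradiction exactly as in Lemma~\ref{lem_stabilize_providers}. Suppose some node $i$ that FPeeR fixes to the peer route $R$ needs more than $x+1$ asynchronous rounds to stabilize to $R$. Then, as in that proof, there must be a node $j$ strictly after $i$ on $R$ that first announced the relevant suffix $R_{|j}$ to its predecessor --- or last switched away from some other route that it does not export to that predecessor --- only a single asynchronous round before $i$ stabilizes; otherwise $i$ could have adopted $R$ one round earlier. By the structural observation, $j$ lies on the customer-route suffix $R_{|p}$ (possibly $j=p$). Hence $j$ is either $d$ or a fixed-route attacker --- both announcing a constant route from the first round --- or $j$ was fixed by FCR, in which case Lemma~\ref{lem_stabilize_providers} guarantees $j$ has stabilized within $x$ asynchronous rounds. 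In every case $j$ is announcing $R_{|j}$ to its predecessor by the end of round $x$, so $i$ has $R$ available and stabilizes by round $x+1$, a contradiction. (One may instead chase the ``one round earlier'' dependency chain further down as in Lemma~\ref{lem_stabilize_providers}; the chain stays inside $R_{|p}$, cannot re-enter $i$ nor close a customer--provider cycle, and bottoms out at a node whose next hop is in $S_A\cup\{d\}$ within at most $x$ customer hops, yielding the same bound.)

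The step I expect to be the main obstacle is the same one the authors wrestle with in Lemma~\ref{lem_stabilize_providers}: making the ``one round earlier'' dependency chain rigorous for an \emph{arbitrary} initial configuration, in which nodes may hold bogus beliefs about peer (and customer) routes. The point to pin down is that such spurious beliefs are purged quickly: a node mistakenly advertising a peer route whose customer suffix does not really exist must stop doing so within one round after the underlying customer-route nodes have corrected themselves, which by Lemma~\ref{lem_stabilize_providers} has happened by the end of round $x$; so no spurious route can prevent $i$ from settling on $R$ beyond round $x+1$. A minor point to dispatch separately is the degenerate case $p\in S_A\cup\{d\}$ (so $R_{|p}$ is announced verbatim from round~$1$), where $i$ clearly stabilizes within two rounds, and the case $p$ itself is an attacker announcing a long route, which Lemma~\ref{lem_stabilize_providers} already absorbs because attackers announce from the start.
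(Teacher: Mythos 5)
Your proposal is correct and follows essentially the same route as the paper: both reduce the claim to the fact that the next hop of a FPeeR-fixed peer route is already in $\mathcal{I}$ (hence is $d$, an attacker, or a node fixed by FCR that stabilizes within $x$ rounds by Lemma~\ref{lem_stabilize_providers}), combine this with Lemma~\ref{lem_convergence}, and conclude that one additional asynchronous round suffices. Your explicit observation that commercial export policies force the entire suffix past the peer edge to be a customer route is a slightly more detailed justification of what the paper asserts in one line, but it is the same argument.
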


\begin{proof}
 By Lemma \ref{lem_convergence}, every node who gets fixed in FPeeR to a peer route, converges to the same route during BGP's execution.  All peer routes that do not contain fixed route attackers, cannot be greater than $x+1$ in length.  Peer routes that contain fixed-route attackers can be longer than $x+1$, but are announced in the beginning of the protocol's execution.
 By Lemma \ref{lem_stabilize_providers}, all nodes found by FCR must be stable, and therefore are not effected by route announcements after $x$ asynchronous rounds.  Nodes considered in FPeeR must not have customer routes in their $BPR$ sets, otherwise they would have been stabilized in FCR, so their favorite available routes must all be peer routes.  Since in FPeeR the only nodes added are the ones whose $Nxt$ is in $\mathcal{I}$ and either converges to a customer route, is $d$ or a fixed attacker, thus-added nodes must stabilize within only one more round than their $Nxt$.
\end{proof}

\begin{lem} \label{lem_stabilize_customers} All nodes in the output of FPrvR, stabilize within $2x+1$ asynchronous rounds. \end{lem}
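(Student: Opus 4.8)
The plan is to bootstrap off Lemmas~\ref{lem_convergence}, \ref{lem_stabilize_providers}, and~\ref{lem_stabilize_peers} exactly as Lemma~\ref{lem_stabilize_peers} bootstrapped off Lemma~\ref{lem_stabilize_providers}. First I would invoke Lemma~\ref{lem_convergence} to know that every node FPrvR fixes converges in BGP to precisely that provider route, and invoke Lemmas~\ref{lem_stabilize_providers} and~\ref{lem_stabilize_peers} to know that every node converging to a customer route is stable within $x$ asynchronous rounds and every node converging to a peer route is stable within $x+1$ rounds. As in those proofs I would dispose of attacker-containing routes up front: a fixed-route attacker announces its fixed route from the start, so a provider route passing through one behaves, for convergence-timing purposes, like a much shorter route rooted at the attacker, while a provider route with no attacker is valley-free and so has at most $2x+1$ hops.

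The core of the argument is a next-hop chaining argument up the customer-provider hierarchy. Fix a node $i_0$ that FPrvR fixes to a provider route and follow the chain $i_0,\, i_1 = Nxt^{i_0},\, i_2 = Nxt^{i_1},\,\ldots$. Whenever $i_t$ converges to a provider route its next hop $i_{t+1}$ is by definition a provider of $i_t$, so the edge $(i_t,i_{t+1})$ points ``up''; as long as the $i_t$ keep converging to provider routes the chain is therefore a strictly ascending customer-provider chain. Since a commercial routing system has no customer-provider cycles and its hierarchy has height $x$, the chain takes at most $x$ such hops before reaching a node $i_\ell$ (with $\ell \le x$) that is $d$, is a fixed-route attacker, or converges to a customer or peer route --- here the commercial export rule is exactly what guarantees that a provider's next-hop route, which is what it re-exports to its customer $i_t$, is itself a customer, peer, or provider route, and rules out the cycle.

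Then I would propagate stability back down the chain one hop per round. The top node $i_\ell$ is stable within at most $x+1$ rounds: from round $0$ if it is $d$ or an attacker, within $x$ rounds by Lemma~\ref{lem_stabilize_providers} if it converges to a customer route, and within $x+1$ rounds by Lemma~\ref{lem_stabilize_peers} if it converges to a peer route. Inductively, once $i_{t+1}$ has stabilized and is exporting its fixed route to $i_t$ (which it does, since FR only ever fixes nodes to exportable routes), one further asynchronous round lets $i_t$ hear it; and since by Lemma~\ref{lem_convergence} this is $i_t$'s best perceivable route and --- by the by-now-standard ``stale $k$-hop information is flushed within a round'' argument from the base cases of the earlier proofs --- every shorter or otherwise non-perceivable provider route that could distract $i_t$ has been withdrawn by then, $i_t$ stabilizes in that round. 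Unwinding the chain, $i_0$ is stable within $(x+1) + \ell \le (x+1) + x = 2x+1$ rounds, as claimed.

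I expect the main obstacle to be making the flushing-out step airtight at the provider level. Unlike shortest-path routing, commercial route preference is not monotone in length, so I cannot argue purely by round-number-versus-route-length; instead I would lean on the facts that all customer and peer routes are already globally stable by round $x+1$, and that any provider route $i_t$ can ever perceive is necessarily export-consistent and suffix-consistent with a next hop that has itself already stabilized, so that the set of provider routes available to $i_t$ at the round in which I claim it stabilizes is exactly the set FPrvR saw when it fixed $i_t$. A secondary, purely bookkeeping nuisance is keeping the off-by-ones straight between route length measured in nodes, in edges, and the height of the hierarchy; I would resolve these by committing to whichever convention Lemmas~\ref{lem_stabilize_providers} and~\ref{lem_stabilize_peers} use, under which the $2x+1$ bound (possibly down to $2x$) falls out.
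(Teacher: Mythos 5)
Your proposal follows essentially the same route as the paper: it reduces the claim to Lemmas~\ref{lem_convergence}, \ref{lem_stabilize_providers} and~\ref{lem_stabilize_peers}, decomposes the provider route into an ascending customer-provider chain of length at most $x$ anchored at $d$, an attacker, or a node already stable within $x+1$ rounds, and propagates stability down the chain one hop per asynchronous round --- the paper merely casts this as a contradiction (a node needing more than $2x+1$ rounds would force a provider chain longer than $x$) rather than as your direct induction. The ``flushing'' issue you flag, namely that a transiently announced, better-ranked but non-persistent provider route could distract $i_t$ past the claimed round, is a genuine subtlety, but the paper's own proof dismisses it via Lemma~\ref{lem_convergence} just as implicitly as you do, so your argument is correct to the same degree as the original.
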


\begin{proof}
By Lemma \ref{lem_convergence}, every node who gets fixed in FPrvR to a provider route, converges to the same route during BGP's execution.
 All provider routes that do not contain fixed route attackers, cannot be greater than $2x+1$ in length.
 Provider routes that contain fixed-route attackers can be longer than $2x+1$, but are announced in the beginning of the protocol's execution.
 By Lemmas \ref{lem_stabilize_providers} - \ref{lem_stabilize_peers}, all nodes found by FCR and FPeeR must be stable, and therefore are not effected by route announcements after $x+1$ asynchronous rounds.  Nodes considered in FPrvR must not have customer or peer routes in their $BPR$ sets, otherwise they would have been stabilized in FCR or FPeeR, so their favorite available routes must all be provider routes.  For the sake of contradiction, suppose that there exists at least one node $i$ who requires more than $2x+1$ asynchronous rounds to stabilize to a route $R$.  Note that this means that it must take $i$ more than $x$ asynchronous rounds to stabilize to a route that contains a node who is $d$, an attacker, or who stabilized to a peer or a customer suffix of this route in $x+1$ or $x$ asynchronous rounds by Lemmas\ref{lem_stabilize_peers} and Lemmas\ref{lem_stabilize_providers} respectively, after that node has converged.  We will argue about only the nodes in this node's prefix of the route; these are the only nodes who stabilize to provider routes in $i$'s route.

There must at least one node $j$ who has selected its suffix $R_{|j}$ or who selected some other route that it does not export to $i$, only a single round before $i$ stabilized.  Otherwise $i$ could have selected $R$ earlier.  In either case, there must exist at least one node $k$ who has selected its suffix $R_{|j_{|k}}$ or who selected some other route that it does not export to $j$, only a single round before $j$ stabilized.  Otherwise $j$ could have selected $R_{|j}$ earlier.  Since $|V|$ is finite, we can continue this argument until we obtain either (i) a node whose next hop is in $S_A \cup \{d\}$, or (ii) customer-provider cycle containing no attackers.  The latter case would contradict the condition that no node can be its own indirect customer.

  Note that all routes considered in this argument are provider routes only and that the choices that nodes, that are guaranteed to converge to provider routes, make with respect to the routes that they are guaranteed to converge to are based only on the choices of their providers, whose choices depend on their providers, and so on.  This is so due to nodes' preferences and export policies in a commercial routing system.  As we follow the argument in this proof to reach case (i) we notice that for every node who converges to route whose length is more than the number of asynchronous rounds it took for that node to converge, we can extend a chain of providers that we have started at node $i$ who took more than $2x+1$ asynchronous rounds to converge.  Thus, in case (i) we could construct a chain of providers that either does not contain an attacker and is longer than $x$ or does contain an attacker and is longer than $x+y$, where $y$ is the length of the route that the fixed attacker on that route announces.  This, however, contradicts the height of the hierarchy of the routing system, noting that attackers announce their routes at the beginning of the protocol's execution. \end{proof}

Lemmas \ref{lem_stabilize_providers}-\ref{lem_stabilize_customers} conclude the proof of Theorem \ref{thm_GR_convergence_time} as they show that every node in a given routing system with a customer-provider hierarchy of height $x$ is guaranteed to converge to the same route within $2x+1$ asynchronous rounds regardless of the initial routing configuration and activation schedule.

\begin{scriptsize}
\bibliographystyle{abbrv} 
\bibliography{partialSec}  
\end{scriptsize}

\end{document}